\DeclareSymbolFont{rsfscript}{OMS}{rsfs}{m}{n}
\DeclareSymbolFontAlphabet{\mathrsfs}{rsfscript}
\newcommand{\sw}{synchronizing word}
\newcommand{\sws}{synchronizing words}
\newcommand{\sa}{synchronizing automata}
\newcommand{\san}{synchronizing automaton}
\begin{document}

\title{Approximating the minimum length\\ of synchronizing words is hard}

\titlerunning{Approximating the length of synchronizing words}

\author{Mikhail V. Berlinkov}

\authorrunning{M. V. Berlinkov}

\tocauthor{M. V. Berlinkov (Ekaterinburg, Russia)}

\institute{Department of Algebra and Discrete Mathematics\\
Ural State University\\
620083 Ekaterinburg, Russia\\
\email{berlm@mail.ru}}

\maketitle

\begin{abstract}
We prove that, unless $\mathrm{P}=\mathrm{NP}$, no polynomial
algorithm can approximate the minimum length of \sws\ for a given
\san\ within a constant factor.
\end{abstract}

\section*{Background and overview}

Let $\mathrsfs{A}=\langle Q,\Sigma,\delta\rangle$ be a complete
\emph{deterministic finite automaton} (DFA), where $Q$ is the state
set, $\Sigma$ is the input alphabet, and $\delta:Q\times\Sigma\to Q$
is the transition function. The function $\delta$ extends in a
unique way to a function $Q\times\Sigma^*\to Q$, where $\Sigma^*$
stands for the free monoid over $\Sigma$; the latter function is
still denoted by $\delta$. Thus, each word in $\Sigma^*$ acts on the
set $Q$ via $\delta$. The DFA $\mathrsfs{A}$ is called
\emph{synchronizing} if there exists a word $w\in\Sigma^*$ whose
action resets $\mathrsfs{A}$, that is to leave the automaton in one
particular state no matter which state in $Q$ it starts at:
$\delta(q,w)=\delta(q',w)$ for all $q,q'\in Q$. Any such word $w$ is
called a \emph{\sw} for $\mathrsfs{A}$. The minimum length of \sws\
for $\mathrsfs{A}$ is denoted by $\min_{synch}(\mathrsfs{A})$.

Synchronizing automata serve as transparent and natural models of
error-resistant systems in many applications (coding theory,
robotics, testing of reactive systems) and also reveal interesting
connections with symbolic dynamics and other parts of mathematics.
For a brief introduction to the theory of \sa\ we refer the reader
to the recent survey~\cite{Vo08}. Here we discuss only some
complexity-theoretical issues of the theory. In the following we
assume the reader's acquaintance with some basics of computational
complexity that may be found, e.g., in~\cite{GJ79,Pa94}.

There is a polynomial algorithm (basically due to
\v{C}ern\'y~\cite{Ce64}) that decides whether or not a given DFA is
synchronizing. In contrast, determining the minimum length of \sws\
for a given \san\ is known to be computationally hard. More
precisely, deciding, given a \san\ $\mathrsfs{A}$ and a positive
integer $\ell$, whether or not $\min_{synch}(\mathrsfs{A})\le\ell$
is NP-complete~\cite{Ep90,GK95,Sa03,Sa07}. Moreover, deciding, given
the same instance, whether or not $\min_{synch}(\mathrsfs{A})=\ell$
is both NP-hard and co-NP-hard~\cite{Sa07}. Thus, unless
$\mathrm{NP}=\mathrm{co}$-NP, even non-deterministic algorithms
cannot find the minimum length of \sws\ for a given \san\ in
polynomial time.

There are some polynomial algorithms that, given a \san, find \sws\
for it, see~\cite{Ep90,Tr06,Ro09}. Such algorithms can be considered
as approximation algorithms for calculating the minimum length of
\sws\ but it seems that they have not been systematically studied
from the approximation viewpoint. Experiments show that Eppstein's
greedy algorithm~\cite{Ep90} behaves rather well on average and
approximates $\min_{synch}(\mathrsfs{A})$ within a logarithmic
factor on all tested instances; however, no theoretical
justification for these observations has been found so far.

In this paper we prove that, unless $\mathrm{P}=\mathrm{NP}$, no
polynomial algorithm can approximate the minimum length of \sws\ for
a given \san\ within a constant factor. This result was announced in
the survey~\cite{Vo08} (with a reference to the present author's
unpublished manuscript) but its proof appears here for the first
time. We also mention that a special case of our result, namely,
non-approximability of $\min_{synch}(\mathrsfs{A})$ within factor~2,
was announced by Gawrychowski~\cite{Gawr}.

The paper is organized as follows. First we exhibit an auxiliary
construction that shows non-approximability of
$\min_{synch}(\mathrsfs{A})$ within factor $2-\varepsilon$ for
automata with 3~input letters. Then we show how to iterate this
construction in order to obtain the main result, again for automata
with 3~input letters. Finally, we describe how the construction can
be modified to extend the result also to automata with only 2~input
letters.

\section{Non-approximability within factor~$2-\varepsilon$}

First we fix our notation and introduce some definitions. When we
have specified a DFA $\mathrsfs{A}=\langle Q,\Sigma,\delta\rangle$,
we can simplify the notation by writing $q.w$ instead of
$\delta(q,w)$ for $q\in Q$ and $w \in \Sigma^*$. For each subset
$S\subseteq Q$ and each word $w\in\Sigma^*$, we write $S.w$ instead
of $\{q.w \mid q \in S \}$. We say that a subset $S \subseteq Q$ is
\emph{occupied} after applying some word $v\in\Sigma^*$ if $S
\subseteq Q.v$.

The length of a word $w\in\Sigma^*$ is denoted by $|w|$. If $1\le
s\le|w|$, then $w[s]$ denotes the letter in the $s$-th position of
$w$; similarly, if $1\le s<t\le|w|$, then $w[s..t]$ stands for the
word $w[s]w[s+1]\cdots w[t]$.

Let $\mathcal{K}$ be a class of \sa\ We say that an algorithm $M$
\emph{approximates the minimal length of \sws\ in $\mathcal{K}$} if,
for an arbitrary DFA $\mathrsfs{A}\in\mathcal{K}$, the algorithm
calculates a positive integer $M(\mathrsfs{A})$ such that
$M(\mathrsfs{A})\ge \min_{synch}(\mathrsfs{A})$. The
\emph{performance ratio} of $M$ at $\mathrsfs{A}$ is
$R_M(\mathrsfs{A})=\dfrac{M(\mathrsfs{A})}{\min_{synch}(\mathrsfs{A})}$.
The algorithm is said to \emph{approximate the minimal length of
\sws\ within factor $k \in \mathbb{R}$} if
$$\sup\{R_M(\mathrsfs{A})\mid \mathrsfs{A}\in\mathcal{K}\}=k.$$

Even though the following theorem is subsumed by our main result, we
prove it here because the proof demonstrates underlying ideas in a
nutshell and in the same time presents a construction that serves as
the induction basis for the proof of the main theorem.

\begin{theorem}\label{prop_2}
If $\mathrm{P}\ne\mathrm{NP}$, then for no $\varepsilon>0$ a
polynomial algorithm approximates the minimal length of \sws\ within
factor $2-\varepsilon$ in the class of all \sa\ with $3$ input
letters.
\end{theorem}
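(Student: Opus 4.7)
The approach is a gap-introducing polynomial reduction from 3-SAT. Given a formula $\varphi$ on $n$ variables and $m$ clauses, I would build in polynomial time a synchronizing DFA $\mathrsfs{A}_\varphi$ over a three-letter alphabet $\{a_0,a_1,c\}$, and establish two matching estimates: $\min_{synch}(\mathrsfs{A}_\varphi)\le L$ when $\varphi$ is satisfiable, and $\min_{synch}(\mathrsfs{A}_\varphi)\ge (2-o(1))L$ when $\varphi$ is unsatisfiable, where $L=L(n,m)$ is some explicit polynomial quantity. Given such a reduction, any polynomial-time $(2-\varepsilon)$-approximation $M$ would return values in disjoint intervals once $L$ is chosen large enough so that $(2-\varepsilon)L$ is strictly below $(2-o(1))L$, thereby deciding 3-SAT in polynomial time and forcing $\mathrm{P}=\mathrm{NP}$.

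For the construction I would let $a_0,a_1$ encode the bits of a candidate truth assignment: reading a word $w\in\{a_0,a_1\}^n$ amounts to ``proposing'' the assignment $w$. The state set would include one state per clause (recording whether that clause has been seen to be satisfied by the bits read so far) together with an auxiliary ``counter'' component tracking how many variable-letters have been consumed modulo~$n$. The third letter $c$ is designed as a verify-and-merge instruction: its action collapses all clause-states onto a single target precisely when the preceding block of $n$ variable-letters encodes a satisfying assignment, and leaves at least two clause-states apart otherwise. With this design, a satisfying assignment for $\varphi$ gives rise to a synchronizing word of length $L=n+O(1)$, while in the unsatisfiable case every $c$ that follows a complete variable-block fails to merge some clause-state, forcing a further full block of $n$ variable-letters before another verification attempt, so that $\min_{synch}(\mathrsfs{A}_\varphi)\ge 2n-O(1)$.

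The main obstacle is the lower bound in the unsatisfiable case: one must rule out clever synchronizing words that interleave $a_0$, $a_1$, and $c$ in ways not captured by the naive ``two honest passes'' picture, for instance by firing $c$ in the middle of a variable block or by using $c$ to prune subsets of states in nonobvious ways. To handle this I would add a small gadget of auxiliary states whose transitions on $c$ are defined so that $c$ can only contribute to reducing the image $Q.v$ when it is applied at positions which are multiples of $n$; every ``off-phase'' application of $c$ is essentially wasted. With this in place, the analysis of any candidate synchronizing word reduces to counting the number of complete variable-blocks it contains between on-phase $c$'s, and the unsatisfiability of $\varphi$ forces that number to be at least two. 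The gap $2-O(1/n)$ is then amplified to any desired $2-\varepsilon$ by choosing $n$ sufficiently large via a standard padding argument. The construction produced here is also intended to serve as the induction base for the iteration used in the main theorem, so I would record the relevant structural invariants of $\mathrsfs{A}_\varphi$ (existence of a distinguished ``sink'' state, the block structure of synchronizing words) explicitly for later reuse.
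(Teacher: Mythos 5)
Your overall strategy coincides with the paper's: a gap-introducing reduction from SAT in which a word over two letters encodes a truth assignment, a third letter $c$ acts as a ``check'' instruction, the satisfiable case yields a reset word of length $n+O(1)$, the unsatisfiable case forces length $2n-O(1)$, and the factor $2-\varepsilon$ is then obtained by padding the number of variables. The genuine gap is at the one place where all the work lies: the lower bound for unsatisfiable $\psi$. You correctly identify that arbitrary interleavings of $c$ with the assignment letters must be defeated, but you only posit ``a small gadget of auxiliary states'' making every off-phase $c$ wasted, with the intended invariant that $c$ can shrink the image only at positions that are multiples of $n$. As stated this mechanism is problematic: a component counting positions modulo $n$ that is advanced by every letter is a permutation component and would itself prevent synchronization, while if $c$ is allowed to reset that counter then $c$ acts nontrivially off-phase and the invariant you want fails. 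The invariant is also stronger than what is needed and stronger than what the paper's construction actually has.

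The paper obtains the lower bound from two concrete devices for which your sketch has no counterpart. First, a family of ``delay lines'' $p_{i,1}\to\cdots\to p_{i,n+1}$ on which all three letters act as a shift, with $p_{i,n+1}.c=q_{i,1}$: this guarantees (Lemma~\ref{T is occupied}) that after any prefix ending at the last occurrence of $c$ among the first $n$ letters, the entire first row $T=\{q_{i,1}\}$ is re-occupied, so early applications of $c$ cannot help. Second, the row $q_{m+1,1}\to\cdots\to q_{m+1,n}\to z_1$ with $c$ resetting to $q_{m+1,1}$ makes the distance from $q_{m+1,1}$ to the sink equal to $n+1$ under \emph{every} word; combined with Lemmas~\ref{rem_corr} and~\ref{rem_corr2}, which show that unsatisfiability keeps $q_{m+1,1}$ in the image after any assignment block followed by one more letter, this yields $|w|\ge(n+1)+(n+1)$ with no need to phase-lock $c$ or to count ``complete variable blocks.'' Until you exhibit a gadget with provable invariants of this kind, the lower bound --- and hence the theorem --- is not established. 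The upper bound, the padding computation, and the intention to reuse the construction as the induction base all match the paper and are fine.
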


\begin{proof}
    Arguing by contradiction, assume that there exist a real number
    $\varepsilon>0$ and a polynomial algorithm $M$ such that
    $R_M(\mathrsfs{A})\le2-\varepsilon$ for every \san\
    $\mathrsfs{A}$ with $3$ input letters.

We fix an arbitrary $n>2$ and take an arbitrary instance $\psi$ of
the classical NP-complete problem SAT (the satisfiability problem
for a system of clauses, that is, formulae in conjunctive normal
form) with $n$ variables. Let $m$ be the number of clauses in
$\psi$. We shall construct a synchronizing automaton
$\mathrsfs{A}(\psi)$ with 3 input letters and polynomial in $m,n$
number of states such that $\min_{synch}(\mathrsfs{A}(\psi))=n+2$ if
$\psi$ is satisfiable and $\min_{synch}(\mathrsfs{A}(\psi))> 2(n-1)$
if $\psi$ is not satisfiable. If $n$ is large enough, namely,
$n\ge\frac6{\varepsilon}-2$, then we can decide whether or not
$\psi$ is satisfiable by running the algorithm $M$ on
$\mathrsfs{A}(\psi)$. Indeed, if $\psi$ is not satisfiable, then
$M(\mathrsfs{A}(\psi))\ge\min_{synch}(\mathrsfs{A}(\psi))> 2(n-1)$,
but, if $\psi$ is satisfiable, then
\begin{multline*}
M(\mathrsfs{A}(\psi))\le(2-\varepsilon)\min\nolimits_{synch}(\mathrsfs{A}(\psi))
=(2-\varepsilon)(n+2)\\ \le(2-\frac6{n+2})(n+2)=2(n-1).
\end{multline*}
Clearly, this yields a polynomial algorithm for SAT: given an
instance of SAT, we can first, if necessary, enlarge the number of
variables to at least $\frac6{\varepsilon}-2$ without influencing
satisfiability and then apply the above procedure. This contradicts
the assumption that $\mathrm{P}\ne\mathrm{NP}$.

Now we describe the construction of the automaton
$\mathrsfs{A}(\psi)=\langle Q,\Sigma,\delta\rangle$. The state set
$Q$ of $\mathrsfs{A}(\psi)$ is the disjoint union of the three
following sets:
\begin{align*}
S^1&= \{q_{i,j} \mid 1 \leq i \leq m+1,\, 1 \leq j \leq n+1,\, i
\neq m+1 \text{ or } j \neq n+1\},\\
S^2&= \{p_{i,j} \mid 1 \leq i \leq m+1,\, 1 \leq j \leq n+1 \},\\
S^3& = \{z_1,z_0\}.
\end{align*}
The size of $Q$ is equal to $2(m+1)(n+1)+1$ so a polynomial in
$m,n$.

The input alphabet $\Sigma$ of $\mathrsfs{A}(\psi)$ is the set
$\{a,b,c\}$. In order to describe the transition function
$\delta:Q\times\Sigma\to Q$, we need an auxiliary function $f
:\{a,b\} \times \{1, \ldots, m\} \times \{1, \ldots, n\} \to Q$
defined as follows. Let the variables involved in $\psi$ be
$x_1,\dots,x_n$ and the clauses of $\psi$ be $c_1,\dots,c_m$. For a
literal $y\in\{x_1,\dots,x_n,\neg x_1,\dots,\neg x_n\}$ and a clause
$c_i$, we write $y \in c_i$ to denote that $y$ appears in $c_i$. Now
set $$f(d,i,j) =
\begin{cases}
    z_0 \text{ if } d=a \text{ and } x_j \in c_i,\\
    z_0 \text{ if } d=b \text{ and } \neg x_j \in c_i,\\
    q_{i,j+1} \text{ otherwise}.
\end{cases}$$

The transition function $\delta$ is defined according to the
following table:

\begin{center}
\begin{tabular}{|l|c|c|c|}
    \hline
    \hfil State $q\in Q$ \hfill    \rule{0pt}{14pt}                      &$\delta(q,a)$&$\delta(q,b)$&$\delta(q,c)$ \\
    \hline
    $q_{i,j} \text{ for } 1\le i\le m, 1\le j\le n \rule{0pt}{14pt}     $&$ f(a,i,j)  $&$ f(b,i,j)  $&$ q_{i,1}  $\\
    \hline
    $q_{m+1,j} \text{ for } 1\le j\le n  \rule{0pt}{14pt}               $&$q_{m+1,j+1}$&$q_{m+1,j+1}$&$ q_{m+1,1}$\\
    \hline
    $q_{i,n+1} \text{ for } 1\le i \le m      \rule{0pt}{14pt}          $&$ z_0       $&$ z_0       $&$ q_{m+1,1}$\\
    \hline
    $p_{i,j}  \text{ for } 1\le i\le m+1,1\le j\le n \ \rule{0pt}{14pt} $&$ p_{i,j+1} $&$ p_{i,j+1} $&$ p_{i,j+1}$\\
    \hline
    $p_{i,n+1} \text{ for } 1\le i \le m+1 \rule{0pt}{14pt}             $&$ z_0       $&$ z_0       $&$ q_{i,1}  $\\
    \hline
    $z_1             \rule{0pt}{14pt}                                   $&$ q_{m+1,1} $&$ q_{m+1,1} $&$ z_0       $\\
    \hline
    $z_0               \rule{0pt}{14pt}                                 $&$ z_0       $&$ z_0       $&$ z_0       $\\
    \hline
\end{tabular}
\end{center}

Let us informally comment on the essence of the above definition.
Its most important feature is that, if the literal $x_j$
(respectively $\neg x_j$) occurs in the clause $c_i$, then the
letter $a$ (respectively $b$) moves the state $q_{i,j}$ to the state
$z_0$. This encodes the situation when one can satisfy the clause
$c_i$ by choosing the value $1$ (respectively $0$) for the variable
$x_j$. Otherwise, the letter $a$ (respectively $b$) increases the
second index of the state. This means that one cannot make $c_i$ be
true by letting $x_j=1$ (respectively $x_j=0$), and the next
variable has to be inspected. Of course, this encoding idea is not
new, see, e.g., \cite{Ep90}.

By the definition, $z_0$ is the zero state of the automaton
$\mathrsfs{A}(\psi)$. Since there is a path to $z_0$ from each state
$q\in Q$,  the automaton $\mathrsfs{A}(\psi)$ is synchronizing.

Figure~\ref{A2_example} shows two automata of the form
$\mathrsfs{A}(\psi)$ build for the SAT instances
\begin{align*}
\psi_1&=\{x_1 \vee x_2 \vee x_3,\, \neg x_1 \vee x_2,\, \neg x_2
\vee x_3,\,\neg x_2 \vee \neg x_3\},\\
\psi_2&=\{x_1 \vee x_2,\,\neg x_1 \vee x_2,\, \neg x_2 \vee
x_3,\,\neg x_2 \vee \neg x_3\}.
\end{align*}
If at some state $q\in Q$ the picture has no outgoing arrow labelled
$d\in\Sigma$, the arrow $q\stackrel{d}{\to}z_0$ is assumed (all
those arrows are omitted in the picture to improve readability). The
two instances differ only in the first clause: in $\psi_1$ it
contains the variable $x_3$ while in $\psi_2$ it does not.
Correspondingly, the automata $\mathrsfs{A}(\psi_1)$ and
$\mathrsfs{A}(\psi_2)$ differ only by the outgoing arrow labelled
$a$ at the state $q_{1,3}$: in $\mathrsfs{A}(\psi_1)$ it leads to
$z_0$ (and therefore, it is not shown) while in
$\mathrsfs{A}(\psi_2)$ it leads to the state $q_{1,4}$ and is shown
by the dashed line.

Observe that $\psi_1$ is satisfiable for the truth assignment
$x_1=x_2=0$, $x_3=1$ while $\psi_2$ is not satisfiable. It is not
hard to check that the word $cbbac$ synchronizes
$\mathrsfs{A}(\psi_1)$ and the word $a^7c$ is one of the shortest
reset words for $\mathrsfs{A}(\psi_2)$.

\begin{figure}[p]
{\normalsize
\begin{center}
\begin{picture}(200,180)(0,-175)
\node(n14)(71.98,-111.8){$q_{3,2}$}
\node(n75)(52.01,-111.8){$q_{2,2}$}
\node(n32)(52.01,-131.74){$q_{2,3}$}
\node(n41)(52.01,-151.68){$q_{2,4}$}
\node(n42)(71.98,-151.68){$q_{3,4}$}
\node(n202)(51.73,-91.68){$q_{2,1}$}
\node(n172)(71.92,-91.86){$q_{3,1}$}
\node(n472)(71.98,-131.74){$q_{3,3}$}
\node(n474)(91.95,-111.8){$q_{4,2}$}
\node(n475)(92.09,-151.68){$q_{4,4}$}
\node(n476)(91.94,-91.91){$q_{4,1}$}
\node(n477)(92.09,-131.74){$q_{4,3}$}
\node(n478)(32.02,-111.8){$q_{1,2}$}
\node(n479)(32.04,-151.68){$q_{1,4}$}
\node(n480)(31.96,-91.86){$q_{1,1}$}
\node(n481)(32.02,-131.74){$q_{1,3}$}
\node(n482)(112.0,-112.0){$q_{5,2}$}
\node(n483)(112.45,-152.54){$z_1$}
\node(n484)(111.87,-91.86){$q_{5,1}$}
\node(n485)(111.93,-131.74){$q_{5,3}$}

\drawedge(n480,n478){$b$}
\drawedge(n478,n481){$b$}
\drawedge[ELdist=1.1](n32,n41){$a,b$}
\drawedge(n472,n42){$b$}
\drawedge[ELdist=1.1](n476,n474){$a,b$}
\drawedge(n474,n477){$a$}
\drawedge(n477,n475){$a$}
\drawedge[ELdist=1.1](n484,n482){$a,b$}
\drawedge[ELdist=1.1](n482,n485){$a,b$}
\drawedge[ELside=r,ELdist=1.1](n485,n483){$a,b$}
\drawedge[ELside=r,ELdist=.1,ELpos=30,curvedepth=-8](n483,n484){$a,b$}
\drawedge(n202,n75){$a$}
\drawedge(n75,n32){$b$}
\drawedge[ELdist=1.1](n172,n14){$a,b$}
\drawedge(n14,n472){$a$}
\drawedge[curvedepth=9.99](n481,n480){$c$}
\drawedge[ELpos=30,curvedepth=3.9,exo=2](n478,n480){$c$}
\drawedge[curvedepth=8.14](n32,n202){$c$}
\drawedge[ELpos=30,curvedepth=3.9,exo=2](n75,n202){$c$}
\drawedge[curvedepth=8.53](n472,n172){$c$}
\drawedge[ELpos=30,curvedepth=3.9,exo=2](n14,n172){$c$}
\drawedge[curvedepth=8.27](n477,n476){$c$}
\drawedge[ELpos=30,curvedepth=3.9,exo=2](n474,n476){$c$}
\drawedge[curvedepth=8.53](n485,n484){$c$}
\drawedge[ELpos=30,curvedepth=3.9,exo=2](n482,n484){$c$}

\drawloop[loopangle=171.25](n480){$c$}
\drawloop[loopangle=169.99](n202){$c$}
\drawloop[loopdiam=7.47,loopangle=165.58](n172){$c$}
\drawloop[loopangle=165.58](n476){$c$}
\drawloop[loopangle=157.89](n484){$c$}

\node(n826)(72.26,-40.08){$p_{3,2}$}
\node(n845)(52.29,-40.08){$p_{2,2}$}
\node(n827)(52.29,-56.05){$p_{2,3}$}
\node(n828)(52.29,-72.02){$p_{2,4}$}
\node(n829)(72.26,-72.02){$p_{3,4}$}
\node(n830)(52.01,-23.93){$p_{2,1}$}
\node(n831)(72.2,-24.11){$p_{3,1}$}
\node(n832)(72.26,-56.05){$p_{3,3}$}
\node(n833)(92.23,-40.08){$p_{4,2}$}
\node(n834)(92.07,-71.94){$p_{4,4}$}
\node(n835)(92.31,-24.11){$p_{4,1}$}
\node(n836)(92.37,-56.05){$p_{4,3}$}
\node(n837)(32.3,-40.08){$p_{1,2}$}
\node(n838)(32.32,-72.02){$p_{1,4}$}
\node(n839)(32.24,-24.11){$p_{1,1}$}
\node(n840)(32.3,-56.05){$p_{1,3}$}
\node(n841)(112.21,-40.08){$p_{5,2}$}
\node(n842)(112.21,-72.02){$p_{5,4}$}
\node(n843)(112.15,-24.11){$p_{5,1}$}
\node(n844)(112.21,-56.05){$p_{5,3}$}

\drawedge(n839,n837){$a,b,c$}
\drawedge(n837,n840){$a,b,c$}
\drawedge[ELdist=3.0](n840,n838){$a,b,c$}
\drawedge[ELdist=1.1](n827,n828){$a,b,c$}
\drawedge(n832,n829){$a,b,c$}
\drawedge[ELdist=1.1](n835,n833){$a,b,c$}
\drawedge(n833,n836){$a,b,c$}
\drawedge(n836,n834){$a,b,c$}
\drawedge[ELdist=1.1](n843,n841){$a,b,c$}
\drawedge[ELdist=1.1](n841,n844){$a,b,c$}
\drawedge[ELdist=1.1](n844,n842){$a,b,c$}
\drawedge(n830,n845){$a,b,c$}
\drawedge(n845,n827){$a,b,c$}
\drawedge[ELdist=1.1](n831,n826){$a,b,c$}
\drawedge(n826,n832){$a,b,c$}
\drawedge(n838,n480){$c$}
\drawedge(n828,n202){$c$}
\drawedge(n829,n172){$c$}
\drawedge(n834,n476){$c$}
\drawedge(n842,n484){$c$}

\node[Nw=10.32,Nh=9.0,Nmr=0.0](n1310)(8.0,-91.91){$x_1$}
\node[Nw=10.32,Nh=9.0,Nmr=0.0](n1316)(8.03,-111.91){$x_2$}
\node[Nw=10.32,Nh=9.0,Nmr=0.0](n1318)(8.03,-131.91){$x_3$}
\node[Nw=10.32,Nh=9.0,Nmr=0.0](n1367)(32.03,-12.0){$c_1$}
\node[Nw=10.32,Nh=9.0,Nmr=0.0](n1368)(52.0,-12.0){$c_2$}
\node[Nw=10.32,Nh=9.0,Nmr=0.0](n1369)(71.97,-12.0){$c_3$}
\node[Nw=10.32,Nh=9.0,Nmr=0.0](n1370)(91.94,-12.0){$c_4$}
\node[Nw=10.32,Nh=9.0,Nmr=0.0](n1525)(111.91,-12.0){$5$}
\node[Nw=10.32,Nh=9.0,Nmr=0.0](n1529)(8.03,-151.91){$4$}

\node(n1646)(112.03,-179.97){$z_0$}

\drawbpedge[ELpos=10,eyo=3.5](n479,-25,135.29,n484,-26,64.96){$c$}
\drawbpedge[ELpos=15,eyo=2](n41,-34,83.0,n484,-26,72.79){$c$}
\drawbpedge[ELpos=15](n42,-41,72.88,n484,-19,41.96){$c$}
\drawbpedge[ELpos=15,eyo=-2](n475,-50,52.03,n484,-6,28.36){$c$}

\drawrect[dash={5.0 3.0}{0.0}](17.72,-17.705,134,-78.295)
\drawrect[dash={5.0 3.0}{0.0}](105.185,-148.555,134,-187.445)

\node[linecolor=White,Nw=10.32,Nh=9.0,Nmr=0.0](n910)(128,-72.0){$S^2$}
\node[linecolor=White,NLdist=0.26,Nw=10.32,Nh=9.0,Nmr=0.0](n912)(136.0,-96.0){$S^1$}
\node[linecolor=White,NLdist=0.26,Nw=10.32,Nh=9.0,Nmr=0.0](n914)(128,-175.97){$S^3$}

\drawedge[dash={3.0
3.0}{0.0},ELside=r,ELdist=2.0,curvedepth=-8.6](n481,n479){$a$ in
$\mathrsfs{A}(\psi_2)$}

\drawedge[curvedepth=8.0](n481,n479){$b$}
\end{picture}
\end{center}
\caption{ The automata $\mathrsfs{A}(\psi_1)$ and
$\mathrsfs{A}(\psi_2)$} \label{A2_example}}
\end{figure}

To complete the proof, it remains to show that
$\min_{synch}(\mathrsfs{A}(\psi))=n+2$ if $\psi$ is satisfiable and
$\min_{synch}(\mathrsfs{A}(\psi))>2(n-1)$ if $\psi$ is not
satisfiable. First consider the case when $\psi$ is satisfiable.
Then there exists a truth assignment
$\tau:\{x_1,\dots,x_n\}\to\{0,1\}$ such that
$c_i(\tau(x_1),\dots,\tau(x_n))=1$ for every clause $c_i$ of $\psi$.
We construct a word $v=v(\tau)$ of length $n$ as follows:
\begin{equation}
\label{encoding} v[j]=\begin{cases}
a &\text{ if } \tau(x_j)=1,\\
b &\text{ if } \tau(x_j)=0.
\end{cases}
\end{equation}
We aim to prove that the word $w=cvc$ is a \sw\ for
$\mathrsfs{A}(\psi)$, that is, $Q.w=\{z_0\}$. Clearly, $z_1.c=z_0$.
Further, $S^2.cv=\{z_0\}$ because every word of length $n+1$ that
does not end with $c$ sends $S^2$ to $z_0$. Now let $T=\{q_{i,1}
\mid 1\le i\le m+1\}$, so $T$ is the ``first row'' of $S^1$. Observe
that $S^1.c=T$. Since $c_i(\tau(x_1),\dots,\tau(x_n))=1$ for every
clause $c_i$, there exists an index $j$ such that either $x_j\in
c_i$ and $\tau(x_j)=1$ or $\neg x_j \in c_i$ and $\tau(x_j)=0$. This
readily implies (see the comment following the definition of the
transition function of $\mathrsfs{A}(\psi)$) that $q_{i,1}.v=z_0$
for all $1\le i\le m$. On the other hand, $q_{m+1,1}.v=z_1$ because
every word of length $n$ that does not involve $c$ sends $q_{m+1,1}$
to $z_1$. Thus, $S^1.cv=T.v=S^3$ and $S^1.w=\{z_0\}$. We have shown
that $w$ synchronizes $\mathrsfs{A}(\psi)$, and it is clear that
$|w|=n+2$ as required.

Now we consider the case when $\psi$ is not satisfiable.
\begin{lemma}
\label{rem_corr}  If $\psi$ is not satisfiable, then, for each word
$v\in\{a,b\}^*$ of length $n$, there exists $i \le m$ such that
$q_{i,n+1} \in T.v$.
\end{lemma}
\begin{proof}
Define a truth assignment $\tau:\{x_1,\dots,x_n\}\to\{0,1\}$ as
follows:
$$\tau(x_j)=\begin{cases}
1& \text{ if } v[j]=a,\\
0& \text{ if } v[j]=b.
\end{cases}$$
Since $\psi$ is not satisfiable, we have
$c_i(\tau(x_1),\dots,\tau(x_n))=0$ for some clause $c_i$, $1\le i\le
m$. According to our definition of the transition function of
$\mathrsfs{A}(\psi)$, this means that $q_{i,j}.v[j]=q_{i,j+1}$ for
all $j=1,\dots,n$. Hence $q_{i,n+1}= q_{i,1}.v\in T.v$.\qed
\end{proof}

\begin{lemma}
\label{rem_corr2}  If $\psi$ is not satisfiable, then for each word
$v\in\{a,b\}^*$ of length~$n$ and each letter $d \in \Sigma$, the
state $q_{m+1,1}$ belongs to $T.vd$.
\end{lemma}

\begin{proof}
If $d=c$, the claim follows from Lemma~\ref{rem_corr} and the
equalities $q_{m+1,1}=q_{i,n+1}.c$ that hold for all $i \leq m$. If
$d\ne c$, we observe that the state $q_{m+1,1}$ is fixed by all
words of length $n+1$ not involving $c$.\qed
\end{proof}

Let $w'$ be a \sw\ of minimal length for $\mathrsfs{A}(\psi)$ and
denote $w=cw'c$. Then the word $w$ is also synchronizing and
$\ell=|w|>n$ because already the length of the shortest path from
$q_{m+1,1}$ to $z_0$ is equal to $n+1$. Let $k$ be the rightmost
position of the letter $c$ in the word $w[1..n]$.

\begin{lemma}
\label{T is occupied} $T\subseteq Q.w[1..k]$.
\end{lemma}

\begin{proof}
Indeed, since $k\le n$, for each $1 \le i \le m+1$ we have
$$p_{i,n+2-k}.w[1..k-1]w[k]=p_{i,n+1}.c = q_{i,1} \in T.\eqno{\Box}$$
\end{proof}

We denote by $v$ the longest prefix of the word $w[k+1..\ell]$ such
that $v\in\{a,b\}^*$ and $|v|\le n$. Since $w$ ends with $c$, the
word $v$ cannot be a suffix of $w$. Let $d\in\Sigma$ be the letter
that follows $v$ in $w$. If $|v|=n$, then Lemma~\ref{rem_corr2}
implies that $q_{m+1,1} \in T.vd$. If $|v|<n$, then by the
definition of $v$ we have $d=c$. Hence
$$q_{m+1,1}.vd=q_{m+1,|v|+1}.c=q_{m+1,1}.$$
Thus, $q_{m+1,1} \in T.vd$ also in this case. Combining this with
Lemma~\ref{T is occupied}, we have
\begin{equation}
\label{occupied} Q.w[1..k]vd\supseteq T.vd \ni q_{m+1,1}.
\end{equation}
From the definitions of $k$ and $v$ it readily follows that
$w[k+1..n]$ is a prefix of $v$ whence $|v|\ge n-k$. Thus,
$|w[1..k]vd|\ge k+(n-k)+1=n+1$. Recall that the length of the
shortest path from $q_{m+1,1}$ to $z_0$ is equal to $n+1$, and the
suffix of $w$ following $w[1..k]vd$ must bring the state $q_{m+1,1}$
to $z_0$ in view of~\eqref{occupied}. Hence $|w|\ge
(n+1)+(n+1)=2n+2>2n$ and $|w'|>2(n-1)$. We have proved that
$\min_{synch}(\mathrsfs{A}(\psi))>2(n-1)$ if $\psi$ is not
satisfiable.\qed
\end{proof}

\section{The main result}

The main result of this paper is
\begin{theorem}\label{theorem_r}
If $\mathrm{P}\ne\mathrm{NP}$, then no polynomial algorithm can
approximate the minimal length of \sws\ within a constant factor in
the class of all \sa\ with $3$ input letters.
\end{theorem}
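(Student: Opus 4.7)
The plan is to iterate the construction from Theorem~\ref{prop_2} so as to amplify the gap between the satisfiable and the unsatisfiable cases without bound. Given any putative polynomial algorithm $M$ that approximates the minimum length of \sws\ within a constant factor $K$, I would fix an integer $r$ depending only on $K$ and, from each SAT instance $\psi$ with $n$ variables and $m$ clauses, construct a \san\ $\mathrsfs{A}_r(\psi)$ over the $3$-letter alphabet $\{a,b,c\}$ with a number of states polynomial in $mn$ (the polynomial's degree depending on the fixed $r$), such that $\min_{synch}(\mathrsfs{A}_r(\psi))\le L(n,r)$ if $\psi$ is satisfiable while $\min_{synch}(\mathrsfs{A}_r(\psi))>r\cdot L(n,r)$ if $\psi$ is not. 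Choosing $r>K$ and arguing exactly as in Theorem~\ref{prop_2} would then decide SAT in polynomial time, contradicting $\mathrm{P}\ne\mathrm{NP}$.

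To build $\mathrsfs{A}_r(\psi)$, I would chain $r$ copies of the verification gadget $S^1$ from Theorem~\ref{prop_2}, each encoding the same formula $\psi$, together with a correspondingly extended family of $p_{i,j}$-``clocks'' of carefully chosen lengths, and an enlarged sink component $\{z_0,z_1,\dots,z_r\}$ whose elements must be visited in turn. The inter-layer transitions would be arranged so that, once the $k$th layer has been driven to its sink $z_k$, the system is re-initialized into the analogue of $q_{m+1,1}$ in the $(k{+}1)$st layer, ready for another verification of $\psi$. In the satisfiable case, a single encoding word $v=v(\tau)$ of a satisfying assignment should drive all $r$ layers to their sinks in one pass, giving a reset word of length $L(n,r)=n+O(r)$ that generalizes the word $cvc$ of Theorem~\ref{prop_2}. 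In the unsatisfiable case, a lower bound of order $rn$ would be established by induction on the layers, the inductive step invoking an appropriate analogue of Lemmas~\ref{rem_corr} and~\ref{rem_corr2} to force at least $n+1$ letters between consecutive visits to the sinks.

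The main obstacle will be designing the inter-layer transitions so that the lower-bound argument of Theorem~\ref{prop_2} genuinely iterates in the unsatisfiable case, i.e., so that the layers cannot be reset in parallel by a single short word, nor have their unsatisfiable ``witnesses'' shared across layers. Concretely, the $p_{i,j}$-clocks of the several layers must be of staggered lengths so that the invariant ``some $q_{m+1,1}$-analogue of the current layer is occupied at its entry'' is preserved along any candidate synchronizing word; this is what allows the single-layer lemmas to be applied within each layer in turn. Once this invariant is secured, the upper bound in the satisfiable case is a direct calculation, the lower bound in the unsatisfiable case follows by straightforward induction on the number of layers, and the contradiction with $\mathrm{P}\ne\mathrm{NP}$ is obtained verbatim from the reduction used in Theorem~\ref{prop_2}.
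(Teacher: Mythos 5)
Your overall strategy---amplify the satisfiable/unsatisfiable gap by iterating the construction of Theorem~\ref{prop_2}, then run the reduction verbatim with $r>K$---is exactly the paper's, and your accounting of parameters (states polynomial in $m,n$ with degree depending on the fixed $r$, upper bound $n+O(r)$ when satisfiable, lower bound of order $rn$ otherwise) matches. The gap is in the architecture you propose for $\mathrsfs{A}_r(\psi)$ and, consequently, in the invariant your lower-bound induction would rest on. You chain $r$ copies of the gadget sequentially, with sinks $z_1,\dots,z_r$ visited in turn, and you propose to maintain the invariant that ``some $q_{m+1,1}$-analogue of the current layer is occupied at its entry.'' Occupancy of a \emph{single} state can only ever force a word as long as the shortest path from that state to the global sink, i.e.\ at most the diameter of the automaton. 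But your satisfiable-case upper bound requires every state to reach $z_0$ within $n+O(r)$ steps, so the diameter is at most $n+O(r)$, and a single-state invariant can never certify a lower bound of order $r(n-1)$. This is not a presentational issue: any sequential hand-off that passes only one (or boundedly many) states into the next layer is structurally incapable of iterating the $n+1$-per-layer cost.

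What the paper does instead is a recursive product rather than a chain: $Q_r=Q_{r-1}\cup(Q_2\setminus\{z_0\})\times Q_{r-1}$, so that every state of the two-level gadget except $z_0$ is blown up into an entire copy of the already-hard automaton $\mathrsfs{A}_{r-1}(\psi)$, and the transitions that would send $\mathrsfs{A}_2(\psi)$ into $q_{m+1,1}$ instead ``drop'' a whole block $\{q\}\times Q_{r-1}$ onto $Q_{r-1}$. The invariant after the first verification pass is then $Q_{r-1}\subseteq\delta_r(T,vd)$---the \emph{entire state set} of $\mathrsfs{A}_{r-1}(\psi)$ is occupied---so the induction hypothesis $\min_{synch}(\mathrsfs{A}_{r-1}(\psi))>(r-1)(n-1)$ applies directly to the remaining suffix, giving $|w|>(n+1)+(r-1)(n-1)$. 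To repair your proposal you would have to replace ``re-initialized into the analogue of $q_{m+1,1}$ in the $(k{+}1)$st layer'' by ``re-initialized onto the full state set of the remaining $(r{-}1)$-layer automaton,'' which is precisely the product construction; the staggered clocks you mention then become the blocks $P_{i,j}=\{p_{i,j}\}\times Q_{r-1}$ that keep the first row $T$ occupied during the initial segment of the word.
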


\begin{proof}
Again we fix an arbitrary $n>2$ and take an arbitrary instance
$\psi$ of SAT with $n$ variables. We shall prove by induction that
for every $r=2,3,\dots$ there exists a \san\
$\mathrsfs{A}_r(\psi)=\langle Q_r,\Sigma,\delta_r\rangle$ with the
following properties:
\begin{itemize}
\item $\Sigma=\{a,b,c\}$;
\item $|Q_r|$ is bounded by a polynomial of $n$ and the number $m$ of
clauses of $\psi$;
\item if $\psi$ is satisfiable under a truth assignment
$\tau:\{x_1,\dots,x_n\}\to\{0,1\}$, then the word
$w=c^{r-1}v(\tau)c$ of length $n+r$ synchronizes
$\mathrsfs{A}_r(\psi)$ (see~\eqref{encoding} for the definition of
the word $v(\tau)$);
\item $\min_{synch}(\mathrsfs{A}_r)>r(n-1)$ if $\psi$ is not
satisfiable.
\end{itemize}
Then, applying the same standard argument as in the proof of
Theorem~\ref{prop_2}, we conclude that for no $\varepsilon>0$ the
minimal length of \sws\ can be approximated by a polynomial
algorithm within factor $r-\varepsilon$. Since $r$ can be
arbitrarily large, the statement of the main result follows.

The induction basis is verified in the proof of
Theorem~\ref{prop_2}: we can choose the \san\ $\mathrsfs{A}(\psi)$
to play the role of $\mathrsfs{A}_2(\psi)$. For the sake of
uniformity, in the sequel we refer to the state set $Q$ of
$\mathrsfs{A}(\psi)$ and its transition function $\delta$ as to
$Q_2$ and respectively $\delta_2$.

Now suppose that $r>2$ and the automaton
$\mathrsfs{A}_{r-1}(\psi)=\langle
Q_{r-1},\Sigma,\delta_{r-1}\rangle$ with the desired properties has
already been constructed. We let
$$Q_r = Q_{r-1}\bigcup(Q_2 \setminus\{z_0\})\times Q_{r-1}.$$
Clearly, $|Q_r|=|Q_{r-1}|\cdot |Q_2|$ and from the induction
assumption it follows that $|Q_r|$ is a polynomial in $m,n$.

We now define the transition function $\delta_r:Q_r\times\Sigma\to
Q_r$. Let $d \in \Sigma$, $q \in Q_r$. If $q \in Q_{r-1}$, then we
set
\begin{equation}
\delta_r(q, d) = \delta_{r-1}(q, d). \label{Def1}
\end{equation}
If $q=(q',q'')\in(Q_2\setminus\{z_0\})\times Q_{r-1}$, we define
\begin{equation}
\label{Def2} \delta_r(q, d) =
\begin{cases}
    z_0     & \text{if } \delta_2(q', d) = z_0,           \\
    q''     & \text{if } \delta_2(q', d) = q_{m+1,1} \text{ and either } \\
            & q'=q_{i,n+1} \text{ for }i\in\{1,\dots,m\} \\
            & \text{or } q'=q_{m+1,j} \text{ for }j\in\{2,\dots,n\}\\
            & \text{or } q'=z_1,  \\
    (\delta_2(q', d), q'') & \text{in all other cases}.
\end{cases}
\end{equation}
Using this definition and the induction assumption, one can easily
verify that the state $z_0$ is the zero state of the automaton
$\mathrsfs{A}_r(\psi)$ and that there is a path to $z_0$ from
every state in $Q_r$. Thus, $\mathrsfs{A}_r(\psi)$ is a \san.

In order to improve readability, we denote the subset
$\{q_{i,j}\}\times Q_{r-1}$ by $Q_{i,j}$ for each state $q_{i,j}
\in S^1$ and the subset $\{p_{i,j}\}\times Q_{r-1}$ by $P_{i,j}$
for each state $p_{i,j} \in S^2$. Slightly abusing notation, we
denote by $T$ the ``first row'' of $S^1 \times Q_{r-1}$, i.e.\
$T=\bigcup_{1\le i\le m+1}Q_{i,1}$. Similarly, let
$P=\bigcup_{1\le i\le m+1}P_{i,1}$ be the ``first row'' of $S^2
\times Q_{r-1}$. We also specify that the dot-notation (like
$q.d$) always refers to the function $\delta_r$.

First we aim to show that if $\psi$ is satisfiable under a truth
assignment $\tau:\{x_1,\dots,x_n\}\to\{0,1\}$, then the word
$w=c^{r-1}v(\tau)c$ synchronizes the automaton
$\mathrsfs{A}_r(\psi)$. By \eqref{Def1} and the induction
assumption we have $Q_{r-1}.c \subseteq Q_{r-1}$ and
$Q_{r-1}.c^{r-2}v(\tau)c=z_0$. Further, we can decompose $((Q_2
\setminus\{z_0\})\times Q_{r-1}).c$ as $\{z_0\}\cup F_{r-1}\cup
F_r$ for some sets $F_{r-1} \subseteq Q_{r-1}$ and $F_r \subseteq
(Q_2 \setminus\{z_0\})\times Q_{r-1}$. By the induction
assumption,
$$F_{r-1}.c^{r-2}v(\tau)c \subseteq Q_{r-1}.c^{r-2}v(\tau)c=z_0$$
Consider the set $F_r$.  Using the definition of the action of $c$
on $Q_2$ via~$\delta_2$, one can observe that $F_r=T\cup G$ where
$G$ stands for $S^2\times Q_{r-1}\setminus P$. From \eqref{Def2}
we see that $T.c=T$ and $G.c \subseteq T\cup G$. Thus we have
$F_r.c^{r-2}v(\tau)c\subseteq T.v(\tau)c\cup G.v(\tau)c,$ and
combining the first alternative in \eqref{Def2} with properties of
the automaton $\mathrsfs{A}_2(\psi)$ established in the proof of
Theorem~\ref{prop_2}, we obtain $T.v(\tau)c =G.v(\tau)c=\{z_0\}$.

Now we consider the case when $\psi$ is not satisfiable. The
following lemma is parallel to Lemma~\ref{rem_corr} and has the same
proof because the action of $a$ and $b$ on the ``blocks'' $Q_{i,j}$
with $1\le i\le m$ and $1\le j\le n$ via $\delta_r$ precisely
imitates the action of $a$ and $b$ on the states $q_{i,j}$ in the
automaton $\mathrsfs{A}(\psi)$, see the last alternative in
\eqref{Def2}.

\begin{lemma}
\label{rem_corr_R}  If $\psi$ is not satisfiable, then, for each
word $v\in\{a,b\}^*$ of length $n$, there exists $i \le m$ such that
$Q_{i,n+1} \subseteq \delta_r(T,v)$.\qed
\end{lemma}

In contrast, the next lemma which is a counterpart of
Lemma~\ref{rem_corr2} uses the fact that in some cases the action of
the letters via $\delta_r$ drops states from $((Q_2 \setminus
{z_0})\times Q_{r-1})$ down to $Q_{r-1}$, see the middle alternative
in \eqref{Def2}.

\begin{lemma}
\label{rem_corr2_R}  If $\psi$ is not satisfiable, then for each
word $v\in\{a,b\}^*$ of length $n$ and each letter $d \in \Sigma$,
we have $Q_{r-1}\subseteq \delta_r(T,vd)$.
\end{lemma}

\begin{proof}
If $d=c$, the claim follows from Lemma~\ref{rem_corr_R} and the
equalities $\delta_r((q_{i,n+1},q''),c)=q''$ that hold for all $i
\le m$ and all $q''\in Q_{r-1}$. If $d\ne c$, we observe that
$\delta_r((q_{m+1,1},q''),v)=(z_1,q'')$ and
$\delta_r((z_1,q''),a)=\delta_r((z_1,q''),b)=q''$ for all $q''\in
Q_{r-1}$. \qed
\end{proof}

Let $w'$ be a \sw\ of minimal length for $\mathrsfs{A}_r(\psi)$ and
denote $w=cw'c$. Then the word $w$ is also synchronizing and
$\ell=|w|>(r-1)n$ by the induction assumption. Let $k$ be the
rightmost position of the letter $c$ in the word $w[1..n]$. We have
the next lemma parallel to Lemma~\ref{T is occupied} and having the
same proof (with the ``blocks'' $P_{i,j}$ with $1 \le i\le m+1$,
$n+2-k\le j\le n$  playing the role of the states $p_{i,j}$).

\begin{lemma}
\label{T is occupied_R} $T\subseteq \delta_r(Q_r,w[1..k])$.\qed
\end{lemma}

Now, as in the proof of Theorem~\ref{prop_2}, we denote by $v$ the
longest prefix of the word $w[k+1..\ell]$ such that $v\in\{a,b\}^*$
and $|v|\le n$. Clearly, $v$ cannot be a suffix of $w$. Let
$d\in\Sigma$ be the letter that follows $v$ in $w$. If $|v|=n$ then
Lemma~\ref{rem_corr2_R} implies that $Q_{r-1}\subseteq
\delta_r(T,vd)$. If $|v|<n$, then by the definition of $v$ we have
$d=c$. Hence
$$\delta_r(Q_{m+1,1},vd)=\delta_r(Q_{m+1,|v|+1},c)=Q_{r-1}.$$
Thus, $Q_{r-1}\subseteq \delta_r(T,vd)$ also in this case. Combining
this with Lemma~\ref{T is occupied_R}, we have
\begin{equation}
\label{occupied_R} \delta_r(Q_r,w[1..k]vd)\supseteq \delta_r(T,vd)
\supseteq Q_{r-1}.
\end{equation}
From the definitions of $k$ and $v$ it readily follows that $|v|\ge
n-k$. Thus, $|w[1..k]vd|\ge k+(n-k)+1=n+1$. The suffix of $w$
following $w[1..k]vd$ must bring the set $Q_{r-1}$ to a single state
in view of~\eqref{occupied_R}. However, by~\eqref{Def1} the
restriction of $\delta_r$ to $Q_{r-1}$ coincides with $\delta_{r-1}$
whence the suffix must be a \sw\ for $\mathrsfs{A}_{r-1}(\psi)$. By
the induction assumption
$\min_{synch}(\mathrsfs{A}_{r-1}(\psi))>(r-1)(n-1)$, and therefore,
$$|w|>(n+1)+(r-1)(n-1)=r(n-1)+2$$ and $|w'|>r(n-1)$. We have thus
proved that $\min_{synch}(\mathrsfs{A}_r(\psi))>r(n-1)$ if $\psi$ is
not satisfiable. This completes the induction step.\qed
\end{proof}

\section{The case of 2-letter alphabets}

We show that the main result extends to \sa\ with only 2 input
letters.

\begin{corollary}
If $\mathrm{P}\ne\mathrm{NP}$, then no polynomial algorithm can
approximate the minimal length of \sws\ within a constant factor
in the class of all \sa\ with $2$ input letters.
\end{corollary}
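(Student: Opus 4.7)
The approach is to derive the corollary from Theorem~\ref{theorem_r} via a polynomial-time alphabet reduction sending each $3$-letter \san\ $\mathrsfs{A}$ to a $2$-letter \san\ $\mathrsfs{B}$ of polynomial size with $\min_{synch}(\mathrsfs{B})=\Theta(\min_{synch}(\mathrsfs{A}))$ and hidden constants independent of $\mathrsfs{A}$. Any polynomial $C$-approximation for $\min_{synch}$ on the $2$-letter class would then compose with such a reduction to give a polynomial constant-factor approximation on the $3$-letter class, contradicting Theorem~\ref{theorem_r}. Applied to the family $\mathrsfs{A}_r(\psi)$, the reduction produces a $2$-letter \san\ whose $\min_{synch}$ is $O(n+r)$ in the satisfiable case and $\Omega(r(n-1))$ otherwise, so choosing $n$ and then $r$ sufficiently large (exactly as in the SAT argument in the proof of Theorem~\ref{prop_2}) rules out every fixed constant-factor approximation.

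The reduction I would use encodes the three letters by length-$2$ binary words over the new alphabet $\{a,b\}$, say $a\mapsto aa$, $b\mapsto ab$, $c\mapsto ba$, and, crucially, also maps the spare pair $bb$ to $c$: without this identification the short binary word $bb$ would trivially collapse $\mathrsfs{B}$ to $z_0$ in two steps. For each non-zero state $q$ of $\mathrsfs{A}$ introduce two fresh ``buffer'' copies $q^{(a)}$ and $q^{(b)}$, and define the transitions of $\mathrsfs{B}$ as follows: from the original $q$, the letter $a$ goes to $q^{(a)}$ and $b$ goes to $q^{(b)}$; from $q^{(a)}$, $a$ goes to $\delta(q,a)$ and $b$ to $\delta(q,b)$; from $q^{(b)}$ both letters go to $\delta(q,c)$; and $z_0$ remains absorbing. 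By construction, reading a binary word of even length starting at an original state of $\mathrsfs{B}$ simulates reading the pair-decoded word from the same state in $\mathrsfs{A}$.

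The lower bound $\min_{synch}(\mathrsfs{B})\ge 2\min_{synch}(\mathrsfs{A})$ is immediate from the simulation property: restricting any \sw\ $w$ of $\mathrsfs{B}$ to its action on original states, the pair-decoded word (of length $\lceil|w|/2\rceil$) must synchronize $\mathrsfs{A}$. The matching upper bound is the step I expect to require the most care, because every single binary letter swaps original states with buffer states, so the two classes can only be simultaneously driven to the absorbing $z_0$ rather than to any common non-zero target. My candidate is the explicit word $w=\mathrm{enc}(v)\cdot a\cdot\mathrm{enc}(v)$ of length $4|v|+1$, where $v$ is a shortest \sw\ of $\mathrsfs{A}$: the prefix $\mathrm{enc}(v)$ already drives every original state into $z_0$, while from any buffer state the very first letter advances us to some original state $r$, after which the remaining $4|v|$ letters---ending with the clean copy $\mathrm{enc}(v)$ that decodes to the synchronizing $v$---bring $r$ into $z_0$ inside $\mathrsfs{A}$. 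This yields $\min_{synch}(\mathrsfs{B})\le 4\min_{synch}(\mathrsfs{A})+1$, closing the two-sided estimate and completing the argument.
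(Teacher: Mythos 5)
Your proposal is correct, but it takes a genuinely different route from the paper's. The paper keeps the state set $Q\times\{a_1,a_2,a_3\}$ and lets the second component act as a register: the letter $a$ advances the register (saturating at $a_3$) while $b$ fires the registered letter and resets it, so that a letter $a_i$ of $\mathrsfs{A}$ is simulated by the variable-length block $a^{i-1}b$; this yields $\min_{synch}(\mathrsfs{A})\le\min_{synch}(\mathrsfs{B})\le 3\min_{synch}(\mathrsfs{A})$ (with a small renaming trick for the degenerate case where a shortest \sw\ uses only $a_3$). You instead use a fixed-length binary block code with explicit buffer states $q^{(a)},q^{(b)}$, which forces you to lean on the absorbing state $z_0$: your $\mathrsfs{B}$ is only defined for automata possessing a zero state, so the contradiction must be drawn from the hard family $\mathrsfs{A}_r(\psi)$ (or from the subclass of $3$-letter automata with a zero state) rather than from Theorem~\ref{theorem_r} as a black box; since every $\mathrsfs{A}_r(\psi)$ has the zero state $z_0$ and every \sw\ must end there, this is harmless, but it should be said explicitly. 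Your two-sided estimate $2\min_{synch}(\mathrsfs{A})\le\min_{synch}(\mathrsfs{B})\le 4\min_{synch}(\mathrsfs{A})+1$ checks out: identifying the spare pair $bb$ with $c$ does block the trivial collapse, the alignment argument for $\mathrm{enc}(v)\,a\,\mathrm{enc}(v)$ is sound because the trailing $2|v|$ letters sit on pair boundaries and hence decode to a word ending in $v$, and the lower bound follows by tracking the original states at even positions (strictly, the decoded word has length $\lfloor|w|/2\rfloor$ rather than $\lceil|w|/2\rceil$, but either version gives a constant factor, which is all that is needed). What the paper's register construction buys is generality---it applies to an arbitrary $3$-letter \san\ with no zero-state hypothesis and makes the lower bound immediate; what yours buys is fixed-length blocks and a self-contained, if slightly more delicate, parity bookkeeping. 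Both close the argument.
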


\begin{proof}
For any \san\ $\mathrsfs{A}=(Q,\{a_1,a_2,a_3\},\delta)$ we can
construct a \san\ $\mathrsfs{B}=(Q',\{a,b\}, \delta')$ such that
\begin{equation}
\label{limits} \min\nolimits_{synch}(\mathrsfs{A}) \le
\min\nolimits_{synch}(\mathrsfs{B}) \le
3\min\nolimits_{synch}(\mathrsfs{A})
\end{equation}
and $|Q'|$ is a polynomial of $|Q|$. Then any polynomial algorithm
approximating the minimal length of \sws\ for 2-letter \sa\ within
factor $r$ would give rise to a polynomial algorithm approximating
the minimal length of \sws\ for 3-letter \sa\ within factor $3r$.
This would contradict Theorem~\ref{theorem_r}.

We let $Q' = Q \ \times \{a_1,a_2,a_3\}$ and define the transition
function $\delta':Q'\times \{a,b\}\to Q'$ as follows:
\begin{align*}
\delta'((q,a_i),a)&=(q,a_{\min(i+1,3)}),\\
\delta'((q,a_i),b)&=(\delta(q,a_i),a_1).
\end{align*}
Thus, the action of $a$ on a state $q'\in Q'$ substitutes an
appropriate letter from in the alphabet $\{a_1,a_2,a_3\}$ of
$\mathrsfs{A}$ for the second component of $q'$ while the action
of $b$ imitates the action of the second component of $q'$ on its
first component and resets the second component to $a_1$. Now is
let a word $w\in\{a_1,a_2,a_3\}$ of length $\ell$ be a \sw\ for
$\mathrsfs{A}$. Define
$$v_s=\begin{cases}
b &\text{if } w[s]=a_1,\\
ab &\text{if } w[s]=a_2,\\
aab &\text{if } w[s]=a_3.
\end{cases}$$
Then the word $v=bv_1\cdots v_\ell$ is easily seen to be a \sw\
for $\mathrsfs{B}$ and $|v|\le 3\ell$ unless all letters in $w$
are $a_3$ but in this case we can just let $a_2$ and $a_3$ swap
their names. Hence the second inequality in~\eqref{limits} holds
true, and the first inequality is clear.
\end{proof}

\subsection*{Acknowledgments}
The author acknowledges support from the Federal Education Agency of
Russia, grant 2.1.1/3537,  and from the Russian Foundation for Basic
Research, grant 09-01-12142.

\end{document}